\newtheorem{thm}{Theorem}[section]
\newtheorem{rem}{Remark}[section]
\begin{document}
%
\title{Enabling Differentiated Services Using Generalized  Power Control Model in Optical Networks}

\author{\IEEEauthorblockN{Quanyan Zhu}
\IEEEauthorblockA{Department of Electrical and Computer Engineering\\
University of Toronto, Ontario M5S 3L1\\
Email: qzhu@control.utoronto.ca}
\and
\IEEEauthorblockN{Lacra Pavel}
\IEEEauthorblockA{Department of Electrical and Computer Engineering\\
University of Toronto, Ontario M5S 3L1\\
Email: pavel@control.utoronto.ca}}
\author{Quanyan~Zhu,~\IEEEmembership{Student Member,~IEEE,}
        Lacra~Pavel,~\IEEEmembership{Senior Member,~IEEE}
\thanks{Quanyan Zhu is with the Department of Electrical and Computer Engineering, University of Illinois at Urbana Champaign, IL, 61801, USA email: zhu31@illinois.edu; L. Pavel is with the Department
of Electrical and Computer Engineering, University of Toronto, Toronto,
ON, M5S 3L1 Canada e-mail:pavel@control.utoronto.ca.}
}

%


\maketitle
\begin{abstract}
This paper considers a generalized framework to study OSNR optimization-based end-to-end link level power control problems in optical networks. We combine favorable features of game-theoretical approach and central cost approach to allow different service groups within the network. We develop solutions concepts for both cases of empty and nonempty feasible sets. In addition, we derive and prove the convergence of a distributed iterative algorithm for different classes of users. In the end, we use numerical examples  to illustrate the novel framework.
\end{abstract}
%
\IEEEpeerreviewmaketitle
\section{Introduction}

Reconfigurable optical Wavelength-Division Multiplexing (WDM) communication networks
with arbitrary topologies are currently enabled by technological advances in optical devices such as optical add/drop MUXes (OADM), optical cross connects (OXC) and dynamic gain equalizer (DGE). It is important that channel transmission performance and quality of service (QoS) be optimized and
maintained after reconfiguration. At the
physical transmission level, channel performance and QoS
are directly determined by the bit-error rate (BER), which
in turn depends on optical signal-to-noise ratio (OSNR),
dispersion and nonlinear effects, \cite{ARG05}. Thus, OSNR is considered as the dominant performance parameter in link-level optimization. Conventional off-line OSNR optimization is done by adjusting channel input
power at transmitter (Tx) to equalize the dominant impairment of noise accumulation in chains of optical amplifiers.  However, for reconfigurable optical networks, where
different channels can travel via different optical paths, it is more desirable to implement on-line
decentralized iterative algorithms to accomplish such adjustment.

Recently, this problem is addressed in many research works \cite{PAV06b},\cite{PAV06a},\cite{PAN05}, and two
optimization-based approaches are prevalently used: the central cost and the non-cooperative game approach.
The goals and models of the two approaches are inherently
different. Central cost approach satisfies the  target
OSNR with minimum total power consumption. The model embeds the OSNR requirements in its constraints and indirectly optimizes a certain design criterion. Such model yields a relatively simple closed-form solution; however, it doesn't optimize OSNR in a direct fashion, and thus, channel performance can be potentially improved for users who need higher quality of transmission. On the other hand, the game approach is a naturally distributed model which directly optimizes OSNR based on a payoff function in a non-cooperative manner. Each user optimizes her own utility to achieve the best possible OSNR. The solution from this approach is given by Nash equilibrium. As a result, this solution concept yields best achievable OSNR levels for each user. Since the game approach involves a cost function arising from pricing, it gives an over-allocation of resources. Some users may wish to avoid such cost and only demand a basic level of transmission. Apparently, these two approaches are for two different type of users and different transmission purposes.

To make use of  the advantages from each approach, we propose a generalized model that combines their features. Such a generalization allows to accommodate different types of users and also provides a novel mixed framework to study OSNR power control problem. We separate users into two different categories. One type of users are those who are willing to pay a price to fully optimize their transmission performance. Another type of users are those who are content with basic transmission quality, or OSNR level, set by the network. The quality of service (QoS) can be met for the former by a game-theoretically based optimization approach; and for the later by a mechanism similar to central cost approach.

The contribution of this paper lies in the capability of service differentiation of the generalized model. For simplicity, total capacity constraints are not considered. The paper is organized as follows. In section 2, we review the network OSNR  model and the basic concepts about the two optimization-based approaches. In section 3, we establish a general framework and propose two solution concepts for two different cases of feasible sets. Section 4 gives an iterative algorithm to achieve such solutions in the framework. This is illustrated in section 5 by numerical examples. Section 6 concludes the paper and points out future directions of research.

\section{Background}

\subsection{Review of Optical Network Model}

Consider a network with a set of optical links
$\mathcal{L}=\{1,2,..,L\}$ connecting the optical nodes, where
channel add/drop is realized. A set $\mathcal{N}=\{1,2,...,N\}$
of channels are transmitted, corresponding to a set of
multiplexed wavelengths. Illustrated in Figure \ref{oplink}, a link $l$ has $K_l$ cascaded optically
amplified spans. Let $N_l$ be the set of channels transmitted
over link $l$. For a channel $i \in \mathcal{N}$, we denote by
$\mathcal{R}_i$ its optical path, or collection of links, from
source (Tx) to destination (Rx). Let $u_i$ be the $i$th channel
input optical power (at Tx), and $\textbf{u}=[u_1,...,u_N]^T$
the vector of all channels' input powers. Let $s_i$ be the
$i$th channel output power (at Rx), and $n_i$ the optical noise
power in the $i$th channel bandwidth at Rx. The $i$th channel
optical OSNR is defined as $OSNR_i=\frac{s_i}{n_i}$. In~\cite{PAV06b}, some assumptions are made to simplify the
expression for OSNR, typically for uniformly designed optical
links:
\begin{enumerate}
\item (A1) Amplified spontaneous emission (ASE) noise power does not participate in amplifier
    gain saturation.
\item (A2) All the amplifiers in a link have the same
    spectral shape with the same total power target and are operated in automatic power control mode.
\end{enumerate}
Under A1 and A2, dispersion and nonlinearity are considered to be limited, and ASE noise accumulation will be the dominant impairment. The OSNR for the $i$th channel is given as
\begin{equation}\label{OSNR}
    OSNR_i=\frac{u_i}{n_{0,i}+\sum_{j\in
\mathcal{N}}\Gamma_{i,j}u_j},  i\in\mathcal{N}
\end{equation}
where $\mathbf{\Gamma}$ is the full $n \times n$ system matrix
which characterizes the coupling between channels. $n_{0,i}$
denotes the $i$th channel noise power at the transmitter.
System matrix $\mathbf{\Gamma}$ encapsulates the basic physics
present in optical fiber transmission and implements an abstraction from a network to an
input-output system. This approach has been used in~\cite{SAAB02} for the wireless case to model CDMA uplink communication.
Different from the system matrix used in
wireless case, the matrix
$\mathbf{\Gamma}$ given in (\ref{eqnGamma}) is commonly
asymmetric and is more complicatedly dependent on parameters
such as spontaneous emission noise, wavelength-dependent gain,
and the path channels take.
\begin{equation}\label{eqnGamma}
  \Gamma_{i,j}=\sum_{i\in\mathcal{R}_i}\sum_{k=1}^{K_l}\frac{G_{l,j}^{k}}{G_{l,i}^{k}}
\left(\prod_{q=1}^{l-1}\frac{\mathbf{T}_{q,j}}{\mathbf{T}_{q,i}}\right)\frac{ASE_{l,k,i}}{P_{o,l}},
 \forall j\in \mathcal{N}_l.
\end{equation}
where $G_{l,k,i}$ is the wavelength dependent gain at $k$th
span in $l$th link for channel $i$;
$\mathbf{T}_{l,i}=\prod_{q=1}^{K_l}G_{l,k,i}L_{l,k}$ with
$L_{l,k}$ being the wavelength independent loss at $k$th span
in $l$th link; $ASE_{l,k,i}$ is the wavelength dependent
spontaneous emission noise; $P_{0,l}$ is the output power at each span.
\begin{figure}
\begin{center}
  \includegraphics[scale=0.3]{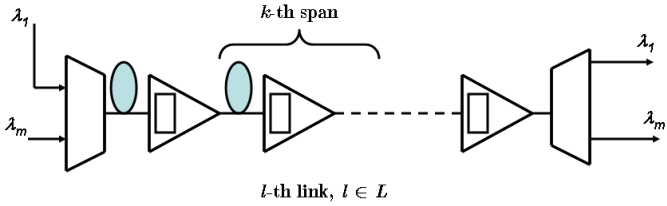}\\
  \caption{A Typical Optical Link in DWDM Optical Networks }\label{oplink}
  \end{center}
\end{figure}

\subsection{Central Cost Approach}
Similar to the SIR optimization problem in the wireless
communication networks~\cite{SMG02,SG01}, OSNR optimization achieves
the target OSNR predefined by each channel user by allowing the
minimum transmission power. Let $\gamma_i, i\in \mathcal{N}$ be
the target OSNR for each channel. By setting the OSNR
requirement as a constraint, we can arrive at the following
central cost optimization problem (CCP):
\begin{equation}\label{CCAOpt}
\begin{tabular}{ll}
  $\textrm{(CCP) }$&$\min_\mathbf{u} \sum_{i\in \mathcal{N}}u_i$ \\
  $\textrm{subject to }$& $OSNR_i \ge \gamma_i \texttt{ } \forall
i\in \mathcal{N}.$ \\
\end{tabular}
\end{equation}
Under certain conditions, it has been shown in~\cite{PAV06b} that the feasible set of (CCP) is nonempty and the optimal solution is achievable at the boundary of the feasible set.

The formulated optimization problem can be extended to
incorporate more constraints such as
\begin{equation}\label{CCAConst3}
     u_{i,\min} \leq u_i
\leq u_{i,\max},
\end{equation}
where $u_{i,\min}$ is minimum threshold power required for
transmission for channel $i$ and  $u_{i,\max}$ is maximum power
channel $i$ can attain. In the central cost approach, power
$u_i$ are the parameters to be minimized and the objective
function is linearly separable. In addition, the constraints are
linearly coupled. These nice characteristics in central cost
approach leads to a relatively simple optimization problem.

\subsection{Non-cooperative Game Approach}
Let's review the basic game-theoretical model for power control in
optical networks without constraints. Consider a game defined by a triplet $\langle
\mathcal{N},(A_i),(J_i)\rangle$. $\mathcal{N}$ is the index set of
players or channels; $A_i$ is the strategy set $\{u_i\mid u_i\in
[u_{i,\min},u_{i,\max}]\}$; and, $J_i$ is the cost function. It is
chosen in a way that minimizing the cost is related to maximizing
OSNR level.  In~\cite{PAV06a}, $J_i$ is defined as
\begin{equation}\label{GAUtil}
    J_i(u_i,u_{-i})=\alpha_i u_i - \beta_i \ln
\left(1+a_i\frac{u_i}{X_{-i}}\right),  i\in\mathcal{N}
\end{equation}
where $\alpha_i,\beta_i$ are channel specific parameters, that
quantify the willingness to pay the price and the desire to maximize
its OSNR, respectively, $a_i$ is a channel specific parameter,
 $X_{-i}$ is defined as $X_{-i}=\sum_{j \neq
i} \Gamma_{i,j}u_j+n_{0,i}$. This specific choice of utility
function is non-separable, nonlinear and coupled. However, $J_i$ is
strictly convex in $u_i$ and takes a specially designed form such
that its first-order derivative is linear with respect to
$\mathbf{u}$.

The solution from the game approach is usually characterized by
Nash equilibrium (NE). Provided that $\sum_{j\neq i}\Gamma_{i,j}< a_i$, the
resulting NE solution is uniquely determined in a closed form by
\begin{equation}\label{GASoln}
    \mathbf{\widetilde{{\Gamma}} u^*=\widetilde{b}},
\end{equation}
where $\widetilde{\Gamma}_{i,j}=a_i,$ for
$j=i$; $\widetilde{\Gamma}_{i,j}=\Gamma_{i,j},$ for $j \neq i$
and $\widetilde{b}=\frac{a_i\beta_i}{\alpha_i}-n_{0,i}$.

Similar to the wireless case \cite{SAAB02}, we are able to
construct iterative algorithms to achieve the Nash equilibrium.
A simple deterministic first order parallel update algorithm is:
\begin{equation}\label{GAAlg2}
    u_i(n+1)=\frac{\beta_i}{\alpha_i}-\frac{1}{a_i}\left(\frac{1}{OSNR_i(n)}-\Gamma_{i,i}\right)u_i(n).
\end{equation}
As proved in \cite{PAV06a}, the algorithm (\ref{GAAlg2}) converges to
Nash equilibrium $\mathbf{u}^*$ provided that
$\frac{1}{a_i}\sum_{j\neq i}\Gamma_{i,j} < 1, \forall i$.

\section{Generalized Model}

In this section, we consider a game designed to allow service differentiation by separating users into two groups: one group seeking a minimum OSNR target and another group participating in a game setting for OSNR optimization. The minimum OSNR for target seekers is set by the network to ensure the minimum quality of service. However, the game players can submit their parameters and optimize their service accordingly, but they have to pay a price set by the network for unit power consumption. This concept is illustrated in Figure \ref{GPTS}.  Let's denote set $\mathcal{N}_1=\{1,2,...,N_1\}$ as the set of competitors, i.e. users who wish to compete for an optimal OSNR. Let set $\mathcal{N}_2=\{N_1+1,\cdots,N_2\}$ be the group of users with target OSNR given by $\gamma_i, i\in \mathcal{N}_2$. Let $\mathcal{N}=\mathcal{N}_1\cup\mathcal{N}_2$, $m=|\mathcal{N}_1|=N_1$, $n=|\mathcal{N}_2|$, $N=|\mathcal{N}|=m+n$ and $\mathbf{u}=[u_1,\cdots,u_{N_1},u_{N_1+1},\cdots,u_{N_2}]^T$.

\begin{figure}
\begin{center}
  \includegraphics[scale=0.35]{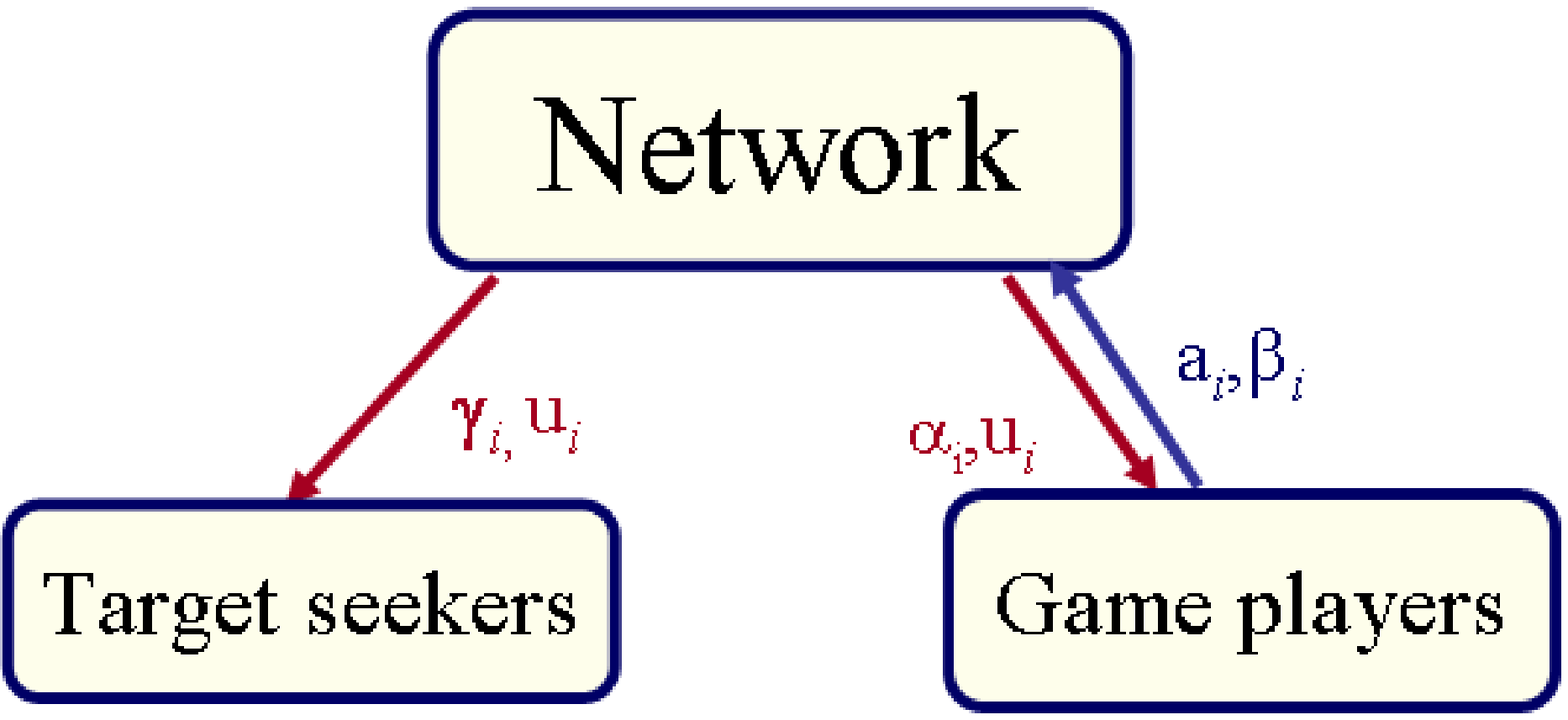}\\
  \caption{Game players and target seekers in the network}\label{GPTS}
  \end{center}
\end{figure}

For the game-theoretical players, using the cost function given in (\ref{GAUtil}), we can form a system of equations given by
$$ a_iu_i+X_{-i}=\frac{a_i\beta_i}{\alpha_i}, \forall i\in\mathcal{N}_1 $$ and thus,
$\widetilde{\mathbf{\Gamma}}\mathbf{u}=\widetilde{\mathbf{b}},$ where $\widetilde{\mathbf{\Gamma}}\in \mathcal{R}^{m\times N}$ and $\widetilde{\mathbf{b}}\in \mathcal{R}^{m}$ are defined as in (\ref{GASoln}). Users with target OSNR shall have $\mathbf{u}$ satisfy
$OSNR_i\geq \gamma_i, \forall i\in \mathcal{N}_2,$
or equivalently from (\ref{OSNR}),
$$\frac{u_i}{\Gamma_{i,i}u_i+\sum_{j\neq i}\Gamma_{i,j}u_j+n_{0,i}}\geq \gamma_i$$
and thus in a matrix form,
$\widehat{\mathbf{\Gamma}}\mathbf{u}\geq\widehat{\mathbf{b}},$
where $\widehat{\mathbf{b}}=[\gamma_1n_{0,1},\cdots,\gamma_Nn_{0,N}]^T\in \mathcal{R}^{n}$, $\widehat{\mathbf{\Gamma}}\in \mathcal{R}^{n\times N}$ and is given in (\ref{GammaN2}).
\begin{figure*}
\begin{equation}\label{GammaN2}
\widehat{\mathbf{\Gamma}}=\left[
  \begin{array}{ccccc}
    -\gamma_{N_1+1}\Gamma_{N_1+1,1} & \cdots & 1-\gamma_{N_1+1}\Gamma_{N_1+1,N_1+1} & \cdots & -\gamma_{N_1+1}\Gamma_{N_1+1,N} \\
    \vdots & \ddots & \ddots & \ddots & \vdots \\
    -\gamma_{N-1}\Gamma_{N-1,1} & -\gamma_{N-1}\Gamma_{N-1,2} & \cdots & 1-\gamma_{N-1}\Gamma_{N-1,N-1} & -\gamma_{N-1}\Gamma_N \\
    -\gamma_N\Gamma_{N,1} & -\gamma_N\Gamma_{N,2} & \cdots & \cdots & 1-\gamma_N\Gamma_{N,N} \\
  \end{array}
\right].
\end{equation}
\hrulefill
\vspace*{4pt}
\end{figure*}
Let $F_1=\{\mathbf{u}\in\mathcal{R}^N\mid \widetilde{\mathbf\Gamma}\mathbf{u}=\widetilde{\mathbf{b}}\}$ and
$F_2=\{\mathbf{u}\in\mathcal{R}^N\mid\widehat{\mathbf{\Gamma}}\mathbf{u}\geq\widehat\mathbf{b}\}$.
In summary, we have a problem  formulated as in (DS), where we find solutions that satisfy $F_1$ subject to the constraint described by $F_2$.
\begin{equation}\label{DS1}
\begin{array}{cc}
  \textrm{(DS)} & \widetilde{\mathbf{\Gamma}}\mathbf{u}=\widetilde{\mathbf{b}}\\
  \textrm{s.t.} &\widehat{\mathbf{\Gamma}}\mathbf{u}\geq\widehat{\mathbf{b}}   \end{array}
\end{equation}

In the following discussion, we separate (DS) into two cases: (1) $F=F_1\cap F_2\neq\emptyset$, (2)$F=F_1\cap F_2=\emptyset$, which require different techniques to find appropriate solutions.

\subsection{Non-empty Feasible Set}

A non-empty $F$ may give rise to multiple points that solve (DS). We may impose some design criteria, or, objective function to reformulate DS for finding an appropriate solution that solves DS and meet the design criteria at the same time.

We can use the following result to ensure the nonempty feasible set $F$.
\begin{thm}\label{nonemptyF}
If $\overline\mathbf{\Gamma}=\left[
                               \begin{array}{c}
                                 \widetilde\mathbf{\Gamma} \\
                                 \widehat\mathbf{\Gamma} \\
                               \end{array}
                             \right]
$ is nonsingular, the feasible set $F=F_1\cap F_2$ is non-empty.
\end{thm}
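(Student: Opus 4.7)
The plan is very direct: since the hypothesis makes $\overline{\mathbf{\Gamma}}$ a nonsingular square matrix, I will simply exhibit an explicit point in $F = F_1 \cap F_2$ by inverting $\overline{\mathbf{\Gamma}}$, and then verify it lies in both sets.

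First I would observe the dimension count. The matrix $\widetilde{\mathbf{\Gamma}}$ is $m \times N$ and $\widehat{\mathbf{\Gamma}}$ is $n \times N$ with $m + n = N$, so the stacked matrix $\overline{\mathbf{\Gamma}}$ is $N \times N$ and it makes sense to talk about its nonsingularity. Next I would introduce the stacked right-hand side
\begin{equation}
\overline{\mathbf{b}} = \left[\begin{array}{c} \widetilde{\mathbf{b}} \\ \widehat{\mathbf{b}} \end{array}\right] \in \mathbb{R}^{N},
\end{equation}
and define $\mathbf{u}^* := \overline{\mathbf{\Gamma}}^{-1}\overline{\mathbf{b}}$, which exists and is unique by the nonsingularity hypothesis.

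The verification step then breaks cleanly along the block structure of $\overline{\mathbf{\Gamma}}$. Reading off the top $m$ rows of the identity $\overline{\mathbf{\Gamma}}\mathbf{u}^* = \overline{\mathbf{b}}$ yields $\widetilde{\mathbf{\Gamma}}\mathbf{u}^* = \widetilde{\mathbf{b}}$, which is exactly the defining condition of $F_1$. Reading off the bottom $n$ rows yields $\widehat{\mathbf{\Gamma}}\mathbf{u}^* = \widehat{\mathbf{b}}$, and since equality implies the inequality $\widehat{\mathbf{\Gamma}}\mathbf{u}^* \geq \widehat{\mathbf{b}}$ componentwise, we have $\mathbf{u}^* \in F_2$. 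Thus $\mathbf{u}^* \in F_1 \cap F_2$ and $F$ is non-empty.

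There is essentially no hard step here: the proof is really just the observation that turning the inequality constraint in $F_2$ into an equality produces a square linear system whose solvability is guaranteed by the hypothesis. The only thing one might briefly flag is that the statement makes no claim about $\mathbf{u}^*$ being physically meaningful (nonnegative input powers, within $[u_{i,\min}, u_{i,\max}]$, etc.); the feasible sets $F_1, F_2$ as defined in the excerpt live in $\mathbb{R}^N$ without such sign constraints, so no further work is needed for the claim as stated. If anything, I would expect the subsequent development (not this theorem) to revisit this point when matching the abstract solution to actual transmit powers.
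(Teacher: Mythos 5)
Your proof is correct and is essentially the paper's argument: the paper parametrizes $F_2$ as $\widehat{\mathbf{\Gamma}}\mathbf{u} = \widehat{\mathbf{b}} + \mu$ for $\mu \in \mathbb{R}^n_+$ and inverts $\overline{\mathbf{\Gamma}}$ for each such $\mu$, while you simply take the instance $\mu = 0$ and exhibit the single point $\overline{\mathbf{\Gamma}}^{-1}\overline{\mathbf{b}}$. Both reduce to solving the square system obtained by tightening the inequality to an equality, so no substantive difference.
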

\begin{proof}
Let $\mu\in\mathcal{R}^n_+$ a nonnegative vector. Equivalently, we can express $F_2$ into $F_2=\{\mathbf{u}\in\mathcal{R}^n\mid \widehat\mathbf\Gamma \mathbf{u}=\widehat\mathbf{b}+\mu,\textrm{~for~some~}\mu\in\mathcal{R}^n_+\}$. The set $F$ is thus equivalently
$F=\{\mathbf{u}\in\mathcal{R}^N\mid \overline{\mathbf\Gamma}\mathbf{u}=\mathbf{\phi},\textrm{~for~some~}\mu\in\mathcal{R}^n_+\}$,
where $\overline\mathbf{\Gamma}=\left[
                     \begin{array}{c}
                       \widetilde\mathbf{\Gamma} \\
                       \widehat\mathbf{\Gamma} \\
                     \end{array}
                   \right]
$ and $\mathbf\phi=\left[
                     \begin{array}{c}
                       \widetilde\mathbf{b} \\
                       \widehat\mathbf{b}+\mu \\
                     \end{array}
                   \right]
$. If $\overline\mathbf{\Gamma}$ is nonsingular, there exist a unique $\mathbf{u}\in\mathcal{R}^N$ for every nonnegative $\mu$. Therefore $F$ is non-empty.
\end{proof}

Suppose conditions in Theorem \ref{nonemptyF} hold and $F$ is nonempty. We consider an appropriate solution in $F$ that satisfies a certain design criteria. Thus, we formulate (DSNP\footnote{DSNP stands for ``Differentiated Service N-person Problem''.}) in which we minimize total power consumption subject to the conditions arising from the different service requirements.

\begin{equation}\label{DSNP}
\begin{array}{cc}
  \textrm{(DSNP)} & \min \sum_iu_i\\
   \textrm{s.t.} & \widetilde{\mathbf{\Gamma}}\mathbf{u}=\widetilde{\mathbf{b}},\widehat{\mathbf{\Gamma}}\mathbf{u}\geq\widehat{\mathbf{b}}
\end{array}
\end{equation}

The constraints of (DSNP) can be relaxed and augmented into
\begin{equation}\label{relaxedEq}
\overline\mathbf{\Gamma}\mathbf{u}\geq\overline\mathbf{b}.
\end{equation}
where
$\overline{\mathbf{\Gamma}}=\left[
                              \begin{array}{c}
                                \widetilde{\mathbf{\Gamma}} \\
                                \widehat{\mathbf{\Gamma}} \\
                              \end{array}
                            \right]
\in\mathcal{R}^{N\times N}$
and $\overline{\mathbf{b}}=\left[
                             \begin{array}{c}
                                \widetilde{\mathbf{b}} \\
                                \widehat{\mathbf{b}} \\
                             \end{array}
                           \right]
\in\mathcal{R}^{N}.$

According to the fundamental theorem of linear programming \cite{BER03}, if (DSNP) is realistic, the solution is obtained at the extreme point of the feasible set $F$. Since $F$ has only one extreme point when $\overline\mathbf\Gamma$ is non-singular, the solution is uniquely given by
\begin{equation}\label{eqSoln}
\mathbf{u}=\overline\mathbf{\Gamma}^{-1}\overline\mathbf{b}.
\end{equation}

To further characterize the solution $\mathbf{u}$, we assume strict diagonal dominance of matrix $\overline\mathbf{\Gamma}$ \cite{Horn90}, which leads to non-singularity of the matrix and uniqueness of the solution.

\begin{thm}\label{StrictDom}
Suppose OSNR targets $\gamma_i, i\in\mathcal{N}_2$ are chosen such that $\gamma_i<\frac{1}{\sum_{j\in\mathcal{N}}\Gamma_{i,j}}, i\in\mathcal{N}_2$. In addition, parameters $a_i$ are chosen as $a_i>\sum_{j\neq i,j\in\mathcal{N}}\Gamma_{ij}, \forall i\in\mathcal{N}_1.$ The matrix $\overline\mathbf{\Gamma}$ is strictly diagonally dominant. And thus, a unique solution to (DSNP) is given by (\ref{eqSoln}).
\end{thm}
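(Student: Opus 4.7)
The plan is to verify strict diagonal dominance row by row, splitting the rows of $\overline{\mathbf{\Gamma}}$ according to whether they correspond to a game player ($i\in\mathcal{N}_1$) or a target seeker ($i\in\mathcal{N}_2$), then invoke Theorem \ref{nonemptyF} together with the extreme-point argument immediately preceding the statement to pin down the unique solution.

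First I would dispose of the rows indexed by $\mathcal{N}_1$. From the definition of $\widetilde{\mathbf{\Gamma}}$ in (\ref{GASoln}), the diagonal entry is $a_i$ and the off-diagonal entries are $\Gamma_{i,j}$ (which are nonnegative by their physical origin in (\ref{eqnGamma})). The hypothesis $a_i>\sum_{j\neq i,\,j\in\mathcal{N}}\Gamma_{i,j}$ is then exactly the strict diagonal dominance condition $|\overline{\Gamma}_{i,i}|>\sum_{j\neq i}|\overline{\Gamma}_{i,j}|$ for these rows. Next I would handle the rows indexed by $\mathcal{N}_2$. By (\ref{GammaN2}), the diagonal entry is $1-\gamma_i\Gamma_{i,i}$ while the off-diagonal entries are $-\gamma_i\Gamma_{i,j}$. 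The hypothesis $\gamma_i<1/\sum_{j\in\mathcal{N}}\Gamma_{i,j}$ rearranges to
\begin{equation*}
\gamma_i\Gamma_{i,i}+\gamma_i\sum_{j\neq i}\Gamma_{i,j}<1,
\end{equation*}
which gives both $1-\gamma_i\Gamma_{i,i}>0$ (so the absolute value is dropped) and $1-\gamma_i\Gamma_{i,i}>\gamma_i\sum_{j\neq i}\Gamma_{i,j}=\sum_{j\neq i}|-\gamma_i\Gamma_{i,j}|$, which is again exactly strict diagonal dominance.

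With every row verified, $\overline{\mathbf{\Gamma}}$ is strictly diagonally dominant, hence nonsingular by the Levy--Desplanques theorem as cited from \cite{Horn90}. Invoking Theorem \ref{nonemptyF} then guarantees $F=F_1\cap F_2\neq\emptyset$, so (DSNP) is feasible. The relaxation (\ref{relaxedEq}) shows $F$ is a polyhedron cut out by $N$ linear constraints in $\mathbb{R}^N$; since $\overline{\mathbf{\Gamma}}$ has full rank, the only candidate extreme point is the unique solution of $\overline{\mathbf{\Gamma}}\mathbf{u}=\overline{\mathbf{b}}$, namely $\mathbf{u}=\overline{\mathbf{\Gamma}}^{-1}\overline{\mathbf{b}}$. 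Applying the fundamental theorem of linear programming as already noted above the statement delivers this point as the unique optimizer, completing the proof.

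I do not expect a serious obstacle here: the argument is essentially a bookkeeping exercise, and the only subtle point is keeping track of the sign in the $\mathcal{N}_2$ rows so that $1-\gamma_i\Gamma_{i,i}$ can be treated without absolute value. The hypothesis on $\gamma_i$ is stated precisely so that this positivity (and hence the clean dominance inequality) is available at once, so the verification collapses to the two short algebraic manipulations sketched above.
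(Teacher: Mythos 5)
Your proposal is correct and follows essentially the same route as the paper: row-by-row verification of strict diagonal dominance (the game-player rows from $a_i>\sum_{j\neq i}\Gamma_{i,j}$, the target-seeker rows from $\gamma_i\sum_{j}\Gamma_{i,j}<1$), nonsingularity via the standard diagonal-dominance result, and the extreme-point/linear-programming argument already set up before the theorem to obtain $\mathbf{u}=\overline{\mathbf{\Gamma}}^{-1}\overline{\mathbf{b}}$ as the unique solution. If anything, your treatment of the sign of $1-\gamma_i\Gamma_{i,i}$ is more careful than the paper's, which loosely writes the dominance inequality with $\sum_j$ in place of $\sum_{j\neq i}$.
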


\begin{proof}
From the assumption that $\gamma_i\sum_{j\in\mathcal{N}}\Gamma_{ij}<1,i\in\mathcal{N}_2$, it is apparent that $\gamma_i<\frac{1}{\Gamma_{ii}}$ and $\left|1-\gamma_i\Gamma_{ii}\right|>\gamma_i\sum_j\Gamma_{ij}, \forall i\in\mathcal{N}_2$. In addition, $a_i>\sum_{j\neq i,j\in\mathcal{N}}\Gamma_{i,j}, \forall i\in\mathcal{N}_1$. Therefore, matrix $\overline\mathbf{\Gamma}$ is strictly diagonally dominant. Using Gershgorin theorem in \cite{Horn90}, we conclude that there exists a unique solution to (DSNP).
\end{proof}

The assumption of strict diagonal dominance in Theorem \ref{StrictDom} is reasonable because typical values of $\Gamma_{ij}$ are found to be on the order of $10^{-3}$ and desirable levels of OSNR are 20-30dB.
\begin{rem}
(DSNP) can be seen as a generalized approach that combines central cost approach in \cite{PAV06b} and non-cooperative game approach in \cite{PAV06a}. When $N_1=\emptyset, N_2\neq\emptyset$, (DSNP) reduces to the central cost approach. Similarly, when $N_1\neq\emptyset, N_2=\emptyset$, (DSNP) reduces to the game-theoretical approach and the given solution is Nash equilibrium accordingly. This framework allows to study two different types of users at the same time.
\end{rem}
\begin{rem}
We illustrate a two-person (DSNP), where player 1 chooses to compete and optimize his utility and player 2 chooses to meet a certain OSNR target $\gamma_2$. We form the 2-by-2 matrix $\overline{\mathbf{\Gamma}}$ and $\overline\mathbf{b}$ as follows.
$$\overline{\mathbf{\Gamma}}=\left[
  \begin{array}{cc}
    a_1 & \Gamma_{12} \\
    -\Gamma_{21}\gamma_2 & 1-\Gamma_{22}\gamma_2 \\
  \end{array}
\right],
\overline{\mathbf{b}}=\left[
                        \begin{array}{c}
                          \frac{a_1\beta_1}{\alpha_1}-n_{0,1} \\
                          n_{0,2}\gamma_2 \\
                        \end{array}
                      \right]
$$

The feasible set $F=F_1\cap F_2$ is shown in Figure \ref{DiffServGamma2} by a dotted line. The relaxed (DSNP) has its relaxed feasible depicted in the shaded region.
The solution is given by $\mathbf{u}^*=\overline{\mathbf{\Gamma}}^{-1}\overline{\mathbf{b}},$ which is illustrated by the dark point in Figure \ref{DiffServGamma2}. $\mathbf{u}^*$ is nonnegative componentwise if network price $\alpha_1$ is set such that $s_2>\frac{n_{0,2}}{1-\Gamma_{22}}$.
\end{rem}

\begin{figure}
\begin{center}
  \includegraphics[scale=0.35]{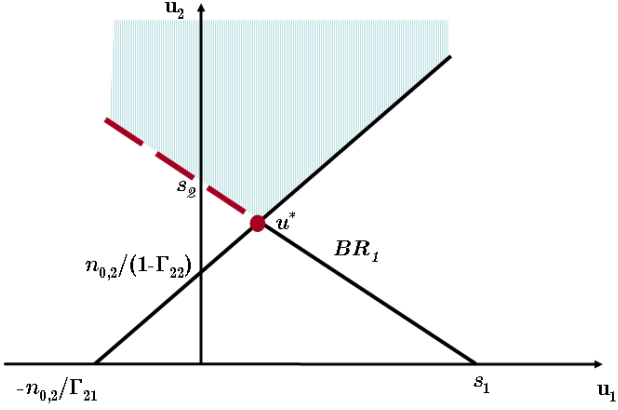}\\
  \caption{The feasible set of two-person (DSNP). $s_1=\frac{\tilde{b}_1}{a_1}$; $s_2=\frac{\tilde{b}_1}{\Gamma_{12}}$}\label{DiffServGamma2}
  \end{center}
\end{figure}

Based on Theorem \ref{StrictDom}, we can further investigate how parameters chosen by game players and target seekers influence the outcome of the allocation. The result is summarized in Theorem \ref{varControl}.

\begin{thm}\label{varControl}
Let $\kappa$ be the condition number of $\overline{\mathbf{\Gamma}}$, $T_i=a_i+\sum_{j\neq i,j\in\mathcal{N}}\Gamma_{ij}, \forall i\in\mathcal{N}_1$ and $S_k=2-2\gamma_k\Gamma_{kk}, \forall k\in\mathcal{N}_2$. Suppose $\overline\mathbf{\Gamma}$ is strictly diagonally dominant by satisfying conditions in Theorem \ref{StrictDom}. In addition, $T_i>S_k$ and $\tilde{b}_i>\hat{b}_k, \forall i\in\mathcal{N}_1, \forall k\in\mathcal{N}_2.$ The maximum allocated power allocated to users are bound as follows.
$$\frac{\max_{i\in\mathcal{N}_2}\gamma_in_{0,i}}{\max_{i\in\mathcal{N}_1}2a_i}\leq\|\mathbf{u}\|_\infty\leq\kappa\max_{i\in\mathcal{N}_1}\frac{\beta_i}{\alpha_i}$$
\end{thm}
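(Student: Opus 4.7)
The plan is to start from the closed form $\mathbf{u}=\overline{\mathbf{\Gamma}}^{-1}\overline{\mathbf{b}}$ provided by Theorem \ref{StrictDom} and to squeeze $\|\mathbf{u}\|_\infty$ between the standard $\infty$-norm estimates
\[
\frac{\|\overline{\mathbf{b}}\|_\infty}{\|\overline{\mathbf{\Gamma}}\|_\infty} \le \|\mathbf{u}\|_\infty \le \|\overline{\mathbf{\Gamma}}^{-1}\|_\infty\,\|\overline{\mathbf{b}}\|_\infty.
\]
Using $\|\overline{\mathbf{\Gamma}}^{-1}\|_\infty=\kappa/\|\overline{\mathbf{\Gamma}}\|_\infty$ (taking $\kappa$ to be the $\infty$-norm condition number), the proof reduces to identifying $\|\overline{\mathbf{\Gamma}}\|_\infty$ and $\|\overline{\mathbf{b}}\|_\infty$ under the stated hypotheses.

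For $\|\overline{\mathbf{\Gamma}}\|_\infty$ I would compute absolute row sums. In an $\mathcal{N}_1$ row the sum is exactly $T_i$. In an $\mathcal{N}_2$ row, the strict diagonal dominance guaranteed by Theorem \ref{StrictDom} makes $1-\gamma_k\Gamma_{kk}$ positive and strictly larger than $\gamma_k\sum_{j\neq k}\Gamma_{kj}$, so the row sum is strictly less than $2(1-\gamma_k\Gamma_{kk})=S_k$. The hypothesis $T_i>S_k$ therefore pins the maximum into the $\mathcal{N}_1$ block, giving $\|\overline{\mathbf{\Gamma}}\|_\infty=\max_{i\in\mathcal{N}_1}T_i$, with $\max_{i\in\mathcal{N}_1}a_i<\|\overline{\mathbf{\Gamma}}\|_\infty<2\max_{i\in\mathcal{N}_1}a_i$ by strict dominance in $\mathcal{N}_1$. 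For $\|\overline{\mathbf{b}}\|_\infty$, the entries are $\tilde b_i=a_i\beta_i/\alpha_i-n_{0,i}$ for $i\in\mathcal{N}_1$ and $\hat b_k=\gamma_k n_{0,k}$ for $k\in\mathcal{N}_2$; the hypothesis $\tilde b_i>\hat b_k$ locates the maximum in $\mathcal{N}_1$, so $\|\overline{\mathbf{b}}\|_\infty\le(\max_{i\in\mathcal{N}_1}a_i)(\max_{i\in\mathcal{N}_1}\beta_i/\alpha_i)$ via $\max_i x_iy_i\le(\max_i x_i)(\max_i y_i)$ on nonnegative sequences, while trivially $\|\overline{\mathbf{b}}\|_\infty\ge\max_{k\in\mathcal{N}_2}\gamma_k n_{0,k}$. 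Substituting these into the two estimates above collapses the upper bound to $\kappa\max_{i\in\mathcal{N}_1}\beta_i/\alpha_i$ (the $\max a_i$ factor cancels between numerator and denominator) and the lower bound to $\max_{k\in\mathcal{N}_2}\gamma_k n_{0,k}/(2\max_{i\in\mathcal{N}_1}a_i)$, which is the claim.

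The main technical obstacle I expect is the bookkeeping on the $\mathcal{N}_2$ rows of $\widehat{\mathbf{\Gamma}}$: dispatching the absolute value around $1-\gamma_k\Gamma_{kk}$, combining strict diagonal dominance with the comparative hypothesis $T_i>S_k$ to ensure the infinity norm of $\overline{\mathbf{\Gamma}}$ is realized in the $\mathcal{N}_1$ block, and then arranging the cancellation of $\max a_i$ cleanly so that the upper bound tightens to $\kappa\max\beta_i/\alpha_i$ rather than a looser multiple of it. Everything else is direct substitution into the standard $\infty$-norm estimates.
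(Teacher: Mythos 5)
Your proposal is correct and follows essentially the same route as the paper's proof: row-absolute-sum computation of $\|\overline{\mathbf{\Gamma}}\|_\infty$ with $T_i>S_k$ forcing the maximum into the $\mathcal{N}_1$ block, the hypothesis $\tilde b_i>\hat b_k$ locating $\|\overline{\mathbf{b}}\|_\infty$ in $\mathcal{N}_1$, and the standard $\infty$-norm sandwich $\|\overline{\mathbf{b}}\|_\infty/\|\overline{\mathbf{\Gamma}}\|_\infty\le\|\mathbf{u}\|_\infty\le\kappa\|\overline{\mathbf{b}}\|_\infty/\|\overline{\mathbf{\Gamma}}\|_\infty$ followed by the cancellation of $\max_{i\in\mathcal{N}_1}a_i$. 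The only cosmetic difference is that you explicitly bound $\tilde b_i\le a_i\beta_i/\alpha_i$ where the paper writes this as an equality, which if anything is slightly cleaner.
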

\begin{proof}
Let $R_i$ denote the i-th row absolute sum of matrix $\overline\mathbf{\Gamma}$, i.e.,
\begin{equation}\label{Row}
R_i=\sum_{j\in\mathcal{N}}\left|\overline{\Gamma}_{ij}\right|.
\end{equation}
Using conditions from Theorem \ref{StrictDom}, we arrive at
\begin{equation}\label{Ri}
R_i=\left\{
  \begin{array}{ll}
    1+\gamma_i\sum_{j\in\mathcal{N}}\Gamma_{ij}-2\gamma_i\Gamma_{ii}<2-2\gamma_i\Gamma_{ii}, & {i\in\mathcal{N}_2;} \\
    a_i+\sum_{j\neq i,j\in\mathcal{N}}\Gamma_{ij}<2a_i., & {i\in\mathcal{N}_1.}
  \end{array}
\right.
\end{equation}

With the assumption that $a_i+\sum_{j\neq i}\Gamma_{ij}> 2-2\gamma_k\Gamma_{kk}, \forall i\in\mathcal{N}_1, \forall k\in\mathcal{N}_2,$ we obtain
$\|\overline\mathbf{\Gamma}\|_\infty=\max_{i\in\mathcal{N}}R_i=\max_{i\in\mathcal{N}}a_i+\sum_{j\neq i}\Gamma_{ij}.$
Using (\ref{Ri}) and the fact that $\Gamma_{ij}\geq0$, we obtain an upper and lower bound on $\|\overline{\mathbf{\Gamma}}\|_\infty$, i.e.,
\begin{equation}\label{ublbOGammaNorm}
  \max_{i\in\mathcal{N}_1}a_i\leq \|\overline\mathbf{\Gamma}\|_\infty \leq\max_{i\in\mathcal{N}_1}2a_i.
\end{equation}

In addition, from $\tilde{b}_i>\hat{b}_k, \forall i\in\mathcal{N}_1, \forall k\in\mathcal{N}_2,$ we obtain an upper bound and lower bound for $\|\overline\mathbf{b}\|_\infty$, given by
\begin{equation}\label{ublbObnorm}
   \max_{i\in\mathcal{N}_2}\gamma_in_{0,i}\leq\|\overline\mathbf{b}\|_\infty=\max_{i\in\mathcal{N}}\overline{b}_i\leq\max_{i\in\mathcal{N}_1}\tilde{b}_i
=\max_{i\in\mathcal{N}_1}\frac{a_i\beta_i}{\alpha_i}
\end{equation}
Since $\overline{\mathbf{\Gamma}}$ is strictly diagonally dominant, using matrix norm sub-multiplicativity, we obtain from  (\ref{eqSoln})
\begin{equation}\label{ublbu}
    \frac{\|\overline{\mathbf{b}}\|_\infty}{\|\overline{\mathbf{\Gamma}}\|_\infty}\leq\|\mathbf{u}\|_\infty\leq\frac{\kappa\|\overline{\mathbf{b}}\|_\infty}{\|\overline\mathbf{\Gamma}\|_\infty},
\end{equation}
where $\kappa$ is the condition number of $\overline{\mathbf{\Gamma}}$ given by $\kappa=\|\overline{\mathbf{\Gamma}}\|_\infty\|\overline{\mathbf{\Gamma}}^{-1}\|_\infty\geq1.$

Using (\ref{ublbOGammaNorm}), (\ref{ublbObnorm}) and (\ref{ublbu}), we obtain
\begin{eqnarray}\label{ublbu2}
\nonumber \frac{\max_{i\in\mathcal{N}_2}\gamma_in_{0,i}}{\max_{i\in\mathcal{N}_1}2a_i}\leq\|\mathbf{u}\|_\infty
 &\leq&\frac{\kappa\max_{i\in\mathcal{N}_1}a_i\beta_i/\alpha_i}{\max_{i\in\mathcal{N}_1}a_i}\\
\nonumber &\leq&\frac{\kappa\max_{i\in\mathcal{N}_1}a_i\max_{i\in\mathcal{N}_1}\beta_i/\alpha_i}{\max_{i\in\mathcal{N}_1}a_i}\\
&\leq&\kappa\max_{i\in\mathcal{N}_1}\frac{\beta_i}{\alpha_i}.
\end{eqnarray}
\end{proof}

It is easy to observe that the upper bound is dependent on the parameters of the game players and the lower bound is dependent on the OSNR levels of target seeker and parameter $a_i$ of the game players. In essence, game players control the outcome of the model and the choice of OSNR target can only affect the lower bound. Such relation describes a fair scenario in which game players, who pay for their power at $\alpha_i$, have their choices of parameters $a_i,\beta_i$ to influence the network allocation.

\begin{rem}
Since $\|\mathbf{u}\|_\infty\leq\|\mathbf{u}\|_2\leq \sqrt{N}\|\mathbf{u}\|_\infty$, we can translate the result obtained in (\ref{ublbu2}) directly into Euclidean norm, i.e.,
\begin{equation}\label{ublbu2Euclid}
B_\infty^L\leq\|\mathbf{u}\|_2\leq \sqrt{N}B^U_\infty
\end{equation}
where $B_\infty^U=\kappa\max_{i\in\mathcal{N}_1}\frac{\beta_i}{\alpha_i}$ and $B_\infty^L=\frac{\max_{i\in\mathcal{N}_2}\gamma_in_{0,i}}{\max_{i\in\mathcal{N}_1}2a_i}$.
\end{rem}
By (\ref{ublbu2Euclid}), we can see that the network can encourage uniform channel power distribution by letting $B_\infty^U$ close to $\sqrt{N}B_\infty^L$ and provide incentive for differentiated services by letting them far apart. It can be implemented by the network by adjusting OSNR level $\gamma_i$ and pricing $\alpha_i$. Decreasing $\alpha_i$  encourages more users to be game players, giving rise to more competitions or service differentiation as a result of higher upper bound. On the other hand, increasing $\gamma_i$ raises the lower bound and encourages more users being target-seekers.

\subsection{Empty Feasible Set}
In this section, we consider the second case where feasible set $F$ is empty. Instead of finding an appropriate feasible solution, we find the closest points between set $F_1$ and $F_2$. We use a quadratic program (DS2) to minimize the error norm subject to the constraint described by $F_2$.

\begin{equation}\label{DS2}
\begin{array}{cc}
  \textrm{(DS2)} & \min_{\mathbf{u}}\|\widetilde{\mathbf{\Gamma}}\mathbf{u}-\widetilde{\mathbf{b}}\|_2\\
  \textrm{s.t.} &\widehat{\mathbf{\Gamma}}\mathbf{u}\geq\widehat{\mathbf{b}}   \end{array}
\end{equation}

We can turn the constrained problem (\ref{DS2}) into an unconstrained problem by studying its corresponding dual problem. Since $\|\widetilde{\mathbf{\Gamma}}\mathbf{u}-\widetilde{\mathbf{b}}\|_2
=\mathbf{u}^T\widetilde{\mathbf{\Gamma}}^T\widetilde{\mathbf{\Gamma}}\mathbf{u}
-2(\widetilde{\mathbf{b}}^T\widetilde{\mathbf{\Gamma}})\mathbf{u}+\widetilde{\mathbf{b}}^T\widetilde{\mathbf{b}}
$, we denote $\mathbf{H}=\frac{1}{2}\widetilde\mathbf{\Gamma}^T\widetilde\mathbf{\Gamma}, \mathbf{d}=-2(\widetilde\mathbf{\Gamma}^T\widetilde\mathbf{b})$, $\mathbf{D}=-\widehat\mathbf{\Gamma}(\mathbf{H}^T\mathbf{H})^{-1}\mathbf{H}^T\widehat\mathbf\Gamma^{T}$, $\mathbf{c}=\widehat\mathbf{b}+\widehat\mathbf{\Gamma}(\mathbf{H}^T\mathbf{H})^{-1}\mathbf{H}^T\mathbf{d}$; and form a Lagrangian from the original problem (DS2).
\begin{eqnarray}\label{dualL}
 D(\mu) &=& \min_{\mathbf{u}}\mathcal{L}(\mathbf{u},\mu) \\
\nonumber    &=& \min_{\mathbf{u}}\left(
    \frac{1}{2}\mathbf{u}^T\mathbf{H}\mathbf{u}+\mathbf{d}^T\mathbf{u}+\widetilde\mathbf{b}^T\widetilde\mathbf{b}
    +\mu^T(-\widehat\mathbf\Gamma \mathbf{u}+\widetilde\mathbf{b})
    \right)
\end{eqnarray}

Since the objective function is convex, the necessary and sufficient
condition for a minimum is that the gradient must vanish,i.e.,
\begin{equation}\label{DualImplicit}
\mathbf{H}\mathbf{u}+\mathbf{d}-\hat\mathbf{\Gamma}^T\mathbf{\mu}=0.
\end{equation}

For $n<N$, $\widetilde{\mathbf{\Gamma}}$ is not full rank. Therefore, $\mathbf{H}$ is singular and there exist multiple solutions to (\ref{DualImplicit}). Using pseudoinverse \cite{Horn90}, we can find a solution to (\ref{DualImplicit}) given by
$$\mathbf{u}=-(\mathbf{H}^T\mathbf{H})^{-1}\mathbf{H}^T\left(\mathbf{d}-\hat\mathbf{\Gamma}^T\mu\right).$$ Thus, after replacing into (\ref{dualL}), we obtain $\mu$ as a solution to the dual problem (DDS2).
\begin{equation}\label{DDS2}
\textrm{(DDS2)} \max_{\mu\geq0} \frac{1}{2}\mu^T\mathbf{D}\mu+\mu^T\mathbf{c}-\frac{1}{2}\mathbf{d}^T(\mathbf{H}^T\mathbf{H})^{-1}\mathbf{H}^T\mathbf{d}+\mathbf{b}^T\mathbf{b}
\end{equation}

The problem (LDS2) and dual problem (DDS2) can be solved using unconstrained optimization algorithms in \cite{BAZ93}, \cite{BER03}.

\section{Iterative Algorithm}
In this section, we develop algorithm for the case of nonempty $F$ set. Let $u_i(n)$ denote the power at channel $i$ at step $n$. An iterative algorithm is given as follows.
\begin{equation}\label{GenItAlg1}
\left\{
  \begin{array}{ll}
    u_i(n+1)=\frac{\beta_i}{\alpha_i}-\frac{1}{a_i}\left(\frac{1}{OSNR_i(n)}-\Gamma_{i,i}\right)u_i(n), & \forall i\in\mathcal{N}_1; \\
     u_i(n+1)=\frac{\gamma_i}{1-\gamma_i\Gamma_{i,i}}\left(\frac{1}{OSNR_i(n)}-\Gamma_{i,i}\right)u_i(n), & \forall i\in\mathcal{N}_2.
  \end{array}
\right.
\end{equation}
\begin{thm}
Algorithm (\ref{GenItAlg1}) converges provided that $a_i>\sum_{j\neq i,j\in\mathcal{N}}\Gamma_{i,j}$ and $\gamma_i$ is chosen such that $\gamma_i<\frac{1}{\sum_{j\in\mathcal{N}}\Gamma_{i,j}}$.
\end{thm}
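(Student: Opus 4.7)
The plan is to recast the two coupled update rules in (\ref{GenItAlg1}) as a single affine linear iteration and then invoke a standard contraction argument in the induced infinity norm. The key observation, which collapses the apparent nonlinearity, is that by the definition of $OSNR_i$ in (\ref{OSNR}) one has the identity
\begin{equation*}
\left(\frac{1}{OSNR_i(n)}-\Gamma_{i,i}\right)u_i(n)=n_{0,i}+\sum_{j\neq i}\Gamma_{i,j}u_j(n),
\end{equation*}
so both lines of (\ref{GenItAlg1}) become purely affine in the current power vector $\mathbf{u}(n)$, with no dependence on $u_i(n)$ itself. Substituting this identity, the joint iteration can be written in the form $\mathbf{u}(n+1)=M\mathbf{u}(n)+\mathbf{c}$ for an $N\times N$ matrix $M$ whose rows split naturally into the two service groups.

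First I would read off the entries of $M$ and $\mathbf{c}$. For a game player $i\in\mathcal{N}_1$ the row has $M_{ii}=0$, $M_{ij}=-\Gamma_{i,j}/a_i$ for $j\neq i$, and $c_i=\beta_i/\alpha_i-n_{0,i}/a_i$. For a target seeker $i\in\mathcal{N}_2$ the row has $M_{ii}=0$, $M_{ij}=\gamma_i\Gamma_{i,j}/(1-\gamma_i\Gamma_{i,i})$ for $j\neq i$, and $c_i=\gamma_in_{0,i}/(1-\gamma_i\Gamma_{i,i})$; note that $1-\gamma_i\Gamma_{i,i}>0$ is guaranteed by the hypothesis $\gamma_i<1/\sum_{j\in\mathcal{N}}\Gamma_{i,j}$, so division by this quantity is legitimate. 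One also checks that the fixed point of this affine map is exactly the solution $\mathbf{u}=\overline{\boldsymbol{\Gamma}}^{-1}\overline{\mathbf{b}}$ of Theorem~\ref{StrictDom}, so convergence of the iteration automatically converges to the (DSNP) solution.

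Next I would bound $\|M\|_\infty=\max_i\sum_{j\neq i}|M_{ij}|$ row by row. For $i\in\mathcal{N}_1$, the $i$th row sum is $(1/a_i)\sum_{j\neq i}\Gamma_{i,j}$, which is strictly less than $1$ by the assumption $a_i>\sum_{j\neq i,j\in\mathcal{N}}\Gamma_{i,j}$. For $i\in\mathcal{N}_2$ the row sum equals $\gamma_i\sum_{j\neq i}\Gamma_{i,j}/(1-\gamma_i\Gamma_{i,i})$, and the hypothesis $\gamma_i\sum_{j\in\mathcal{N}}\Gamma_{i,j}<1$ rearranges precisely to $\gamma_i\sum_{j\neq i}\Gamma_{i,j}<1-\gamma_i\Gamma_{i,i}$, giving a row sum strictly less than $1$. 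Hence $\|M\|_\infty<1$.

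With $\|M\|_\infty<1$ the affine map $\mathbf{u}\mapsto M\mathbf{u}+\mathbf{c}$ is a contraction in the infinity norm, so by the Banach fixed point theorem the sequence $\{\mathbf{u}(n)\}$ generated by (\ref{GenItAlg1}) converges geometrically from any initial $\mathbf{u}(0)$ to the unique fixed point, which coincides with the solution characterized in Theorem~\ref{StrictDom}. I do not anticipate a serious obstacle here; the only delicate bookkeeping is ensuring that the two hypotheses of the theorem are exactly what is needed to drive each of the two distinct row-sum bounds below $1$, and that $1-\gamma_i\Gamma_{i,i}$ is positive so that the target-seeker row entries have the expected sign. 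Once the linear recasting is done, everything else is routine.
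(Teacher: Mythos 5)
Your proposal is correct and is essentially the same argument as the paper's: the matrix $M$ you construct is exactly the matrix governing the paper's error recursion $e_i(n+1)$, your row-sum bounds coincide with the paper's contraction factors $\frac{1}{a_i}\sum_{j\neq i}\Gamma_{i,j}$ and $\frac{\gamma_i\sum_{j\neq i}\Gamma_{i,j}}{1-\gamma_i\Gamma_{i,i}}$, and both proofs conclude via the contraction mapping theorem in the infinity norm. The only cosmetic difference is that you make the affine form $\mathbf{u}(n+1)=M\mathbf{u}(n)+\mathbf{c}$ and the identification of the fixed point with $\overline{\mathbf{\Gamma}}^{-1}\overline{\mathbf{b}}$ explicit, whereas the paper subtracts the fixed point first and works directly with the error.
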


\begin{proof}
We use a similar approach from \cite{PAV06a} to show the convergence of (\ref{GenItAlg1}).
Let's define $e_i(n)=u_i(n)-u_i^*$, where $u_i^*$ is given in (\ref{eqSoln}). Since $\overline\mathbf{\Gamma}\mathbf{u}^*=\overline\mathbf{b}$, $\widetilde{\Gamma}_{i,i}u_i^*+\sum_{j\neq i}\widetilde{\Gamma}_{i,j}u_j^*=\tilde{b}_i$, for $i\in\mathcal{N}_1$; and, $\widehat{\Gamma}_{i,i}u_i^*+\sum_{j\neq
i}\widehat{\Gamma}_{i,j}u_j^*=\hat{b}_i$, for $i\in\mathcal{N}_2$.

Substitute the expression for $u_{i}^*$ into
$e_i(n+1)$, and we obtain
$e_i(n+1)=u_i(n+1)-u_i^*=-\frac{1}{{a_i}}\left[\sum_{j\neq i}{\Gamma}_{i,j}(u_j(n)-u_j^*)\right]$, for $i\in\mathcal{N}_1$; and $e_i(n+1)=u_i(n+1)-u_i^*=\frac{1}{1-{\Gamma}_{i,i}\gamma_i}\left[\sum_{j\neq i}{\Gamma}_{i,j}\gamma_i(u_j(n)-u_j^*)\right]$, for $i\in\mathcal{N}_2$. Let $\mathbf{e}=[e_i(n)],i\in\mathcal{N}$.
Therefore, for $i\in\mathcal{N}_1$,
\begin{eqnarray}
  |e_i(n+1)|
&=& \left|\frac{1}{a_i}\left[\sum_{j\neq
i,j\in\mathcal{N}}\Gamma_{i,j}(e_j(n))\right]\right| \\
   &\leq & \frac{1}{a_i}\sum_{j\neq i,j\in\mathcal{N}}\Gamma_{i,j}\max_{j\in\mathcal{N}}|e_j(n)|\\
   &\leq & \frac{1}{a_i}\sum_{j\neq i,j\in\mathcal{N}}\Gamma_{i,j}\|\mathbf{e}(n)\|_\infty.
\end{eqnarray}
and similarly, for $i\in\mathcal{N}_2$,
\begin{eqnarray}
\nonumber  |e_i(n+1)| &=& \left|\frac{1}{1-\Gamma_{i,i}\gamma_i}\left[\sum_{j\neq
i,j\in\mathcal{N}}\Gamma_{i,j}\gamma_i(e_j(n))\right]\right| \\
   &\leq & \frac{\gamma_i}{|1-\Gamma_{i,i}\gamma_i|}\sum_{j\neq i,j\in\mathcal{N}}\Gamma_{i,j}\max_{j\in\mathcal{N}}|e_j(n)|.\\
   &\leq & \frac{\gamma_i}{|1-\Gamma_{i,i}\gamma_i|}\sum_{j\neq i,j\in\mathcal{N}}\Gamma_{i,j}\|\mathbf{e}(n)\|_\infty.
\end{eqnarray}
Since we assumed that $a_i>\sum_{j\neq i,j\in\mathcal{N}}\Gamma_{i,j}$ and $\gamma_i$ is chosen such that $\gamma_i<\frac{1}{\sum_{j\in\mathcal{N}}\Gamma_{i,j}}\leq\frac{1}{\Gamma_{i,i}}$, we can
conclude that $\|\mathbf{e}(n)\|\rightarrow 0$ from the contraction mapping
theorem. As a result, we have $u_i(n)\rightarrow u_i^*$ as
$n\rightarrow \infty$, for $i\in\mathcal{N}$.
\end{proof}
\begin{rem}
From the proof, we note that the rate of convergence of \label{GenItAlg} is determined by $$\sigma=\max\left\{\max_{i\in\mathcal{N}_1}\frac{\sum_{j\neq i,j\in\mathcal{N}}\Gamma_{i,j}}{a_i},\max_{i\in\mathcal{N}_2}\frac{\sum_{j\neq i,j\in\mathcal{N}}\Gamma_{i,j}\gamma_i}{1-\Gamma_{i,i}\gamma_i}\right\}.$$ In addition, it is easy to observe that the OSNR target-seeking users are  algorithmically equivalent to competition seeking users by letting $\beta_i/\alpha_i=0$ and $a_i=\Gamma_{i,i}-\frac{1}{\gamma_i}$, $i\in\mathcal{N}_2$. This is because no notion of pricing is used for the OSNR target seekers and they just have a utility target to meet or equivalently optimize by letting $a_i=\Gamma_{i,i}-\frac{1}{\gamma_i}$.
\end{rem}

\section{Numerical Examples}
In this section, we illustrate the concept by a MATLAB simulation. We consider an end-to-end link described in Figure \ref{oplink} with 5 amplified spans. We assume channels are transmitted at wavelengths distributed centered around 1555nm  with channel separation of 1nm. Suppose input noise power is 0.5 percent of the input signal power. The gain profile for each amplifier is identically assumed to be parabolic as in Figure \ref{GainProfile}, which is normalized with respect to  $G_{\max}=30.0$dB. Suppose 20dB is the target OSNR level for users who just want to meet a satisfactory level of transmission. We first show the case of 3 users, in which 2 users need better quality of service and one user is simply interested in meeting 20dB as a target. From Figure \ref{3userDS}, we can observe that users who need better services reach an OSNR of 26.33dB and 29.20dB, respectively. With an appropriate choice of initial conditions, the algorithm quickly converges in 1-2 steps.  In Figure \ref{30userDS}, we similarly show the case of 30 users, in which 20 are game players and 10 are target seekers.

\begin{figure}
\begin{center}
  \includegraphics[scale=0.37]{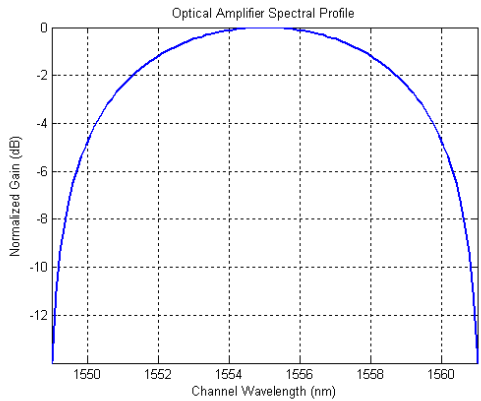}\\
  \caption{Optical Amplifier Spectral Profile}\label{GainProfile}
  \end{center}
\end{figure}

\begin{figure}
\begin{center}
  \includegraphics[scale=0.39]{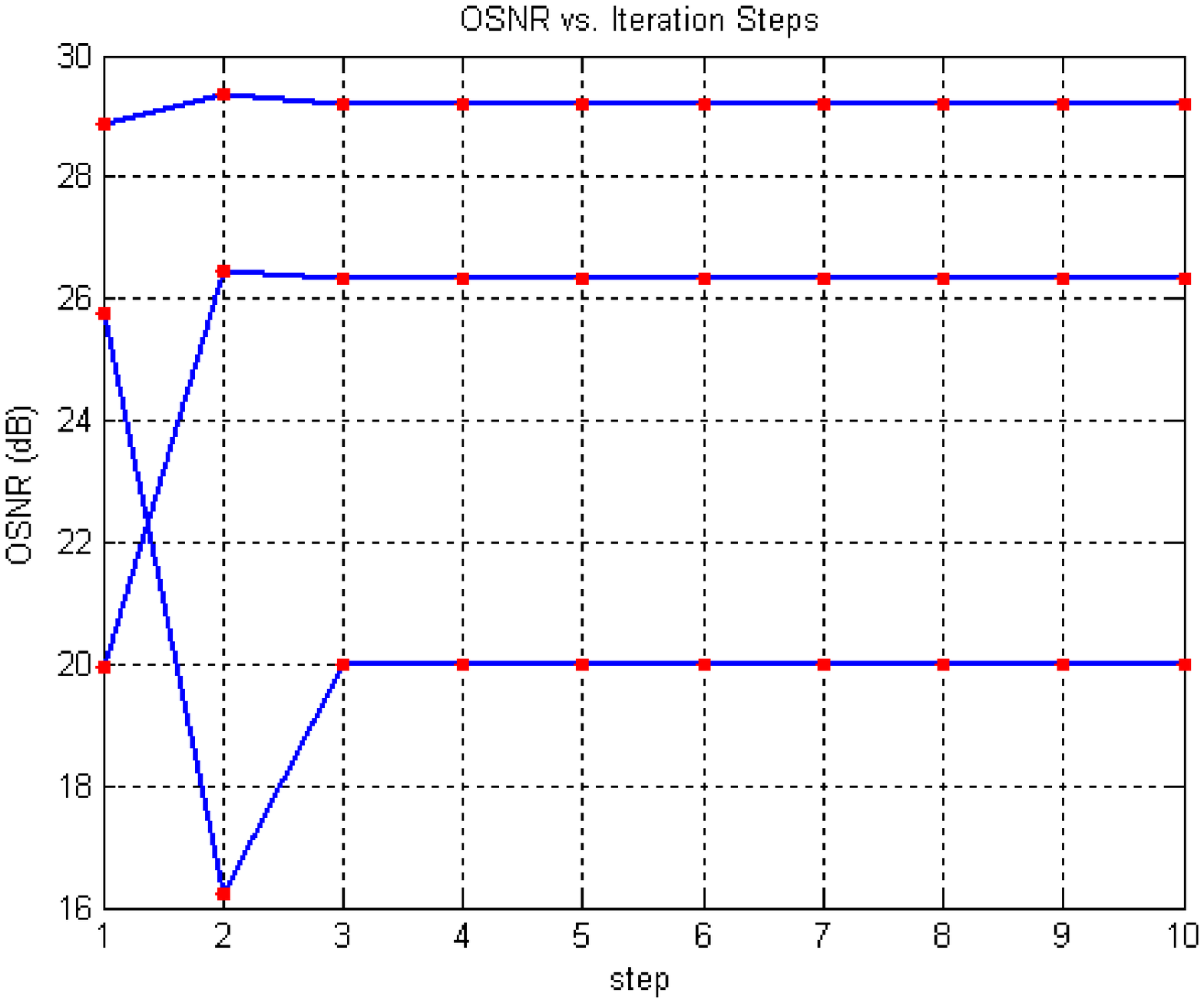}\\
  \caption{OSNR simulation with 3 users in time steps}\label{3userDS}
  \end{center}
\end{figure}

\begin{figure}
\begin{center}
  \includegraphics[scale=0.39]{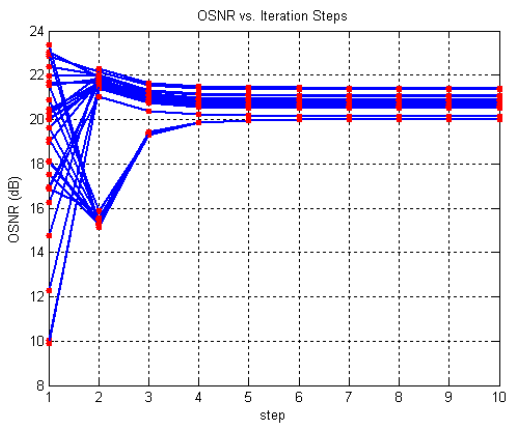}\\
  \caption{OSNR simulation with 30 users in time steps}\label{30userDS}
  \end{center}
\end{figure}

\section{Conclusion}

In this paper, we examined a generalized power control model in optical networks, which combines  features of central cost approach and game-theoretical approach. It enables two major service types in the network. One is game player, who pays for his power consumption and the other is target seeker, who is satisfied with a minimum service level set by the network. We discussed two different solutions concepts for nonempty and empty feasible set respectively and specifically designed an iterative algorithm that converges to a unique solution for the case of nonempty feasible set. The convergence of the algorithm was proved and illustrated by numerical examples of a WDM end-to-end optical link.

In this work, we didn't include capacity constraints for the sake of simplicity. We hope this work will lead to future investigations of more complicated cases where network constraints and nonlinear effects are considered. In addition, we expect this framework to be used to solve similar problems in other types of networks, for example, wireless networks.

\bibliographystyle{IEEEtran}
\bibliography{IEEEabrv,Xbib}
%

%
\end{document}